\newtheorem{theorem} {Theorem}
\newtheorem{prop} {Proposition}
\newtheorem{Example}{Example}
\renewcommand{\le}{\leqslant}
\renewcommand{\ge}{\geqslant}
\newcommand{\E}{\mathbb{E}}
\newcommand{\Prob}{\mathrm{P}}
\DeclareMathOperator*{\maximize}{\rm maximize\,}
\begin{document}
\title{Quick Detection of High-degree Entities \\ in Large Directed Networks}

\author{K. Avrachenkov \\ Inria \\ k.avrachenkov@inria.fr \and N. Litvak \\ University of Twente \\ n.litvak@utwente.nl \and
L. Ostroumova Prokhorenkova \\ Yandex \\ ostroumova-la@yandex.ru \and E. Suyargulova \\ Yandex \\ siyargul@yandex.ua}


%

\maketitle

\begin{abstract}
\boldmath
In this paper we address the problem of quick detection of high-degree entities in large online social networks. Practical importance of this problem is attested by a large number of companies that continuously collect and update statistics about popular entities, usually using the degree of an entity as an approximation of its popularity. We suggest a simple, efficient, and easy to implement two-stage randomized algorithm that provides highly accurate solutions to this problem. For instance, our algorithm needs only one thousand API requests in order to find the top-100 most followed users, with more than 90\% precision, in the online social network Twitter with approximately a billion of registered users. Our  algorithm significantly outperforms existing methods and serves many different purposes such as finding the most popular users or the most popular interest groups in social networks. An important contribution of this work is the analysis of the proposed algorithm using Extreme Value Theory~--- a branch of probability that studies extreme events and properties of largest order statistics in random samples. Using this theory we derive an accurate prediction for the algorithm's performance and show that the number of API requests for finding the top-$k$ most popular entities is sublinear in the number of entities. Moreover, we formally show that the high variability of the entities, expressed through heavy-tailed distributions, is the reason for the algorithm's efficiency. We quantify this phenomenon in a rigorous mathematical way.

\footnotetext{The authors are given in alphabetical order. L. Ostroumova Prokhorenkova is the principal author.}

\end{abstract}

\IEEEpeerreviewmaketitle

\section{Introduction}\label{sec:introduction}

In this paper we propose a randomized algorithm for quick detection of high-degree entities in large online social networks. The entities can be, for example, users, interest groups, user categories, geographical locations, etc. For instance, one can be interested in finding a list of Twitter users with many followers or Facebook interest groups with many members. The importance of this problem is attested by a large number of companies that continuously collect and  update statistics about popular entities in online social networks ({\it twittercounter.com}, {\it followerwonk.com}, {\it twitaholic.com}, {\it www.insidefacebook.com}, {\it yavkontakte.ru} just to name a few).

The problem under consideration may seem trivial if one assumes that the network structure and the relation between entities are known. However, even then finding for example the top-$k$ in-degree nodes in a directed graph $G$ of size $N$ takes the time ${\rm O}(N)$. For very large networks, even linear complexity is too high cost to pay. Furthermore, the data of current social networks is typically available only to managers of social networks and can be obtained by other interested parties only through API (Application Programming Interface) requests.
API is a set of request messages, along with a definition of the structure of response messages.
Using one API request it is usually possible to discover either friends of one given user, or his/her interest groups, or the date when his/her account was created, etc. The rate of allowed API requests is usually very limited. For instance, Twitter has the limit of one access per minute for one standard API account (see \textit{dev.twitter.com}). Then, in order to crawl the entire network with a billion users, using one standard API account, one needs more than 1900 years.

Hence currently, there is a rapidly growing interest in algorithms that
 evaluate specific network properties, using only local information (e.g., the degree of a node and its neighbors), and give a good approximate answer in the number of steps that is sublinear in the network size.
Recently, such algorithms have been proposed for PageRank evaluation~\cite{Avrachenkov2011Top-kPPR,SublinearPageRank,Borgs2013PageRank}, for finding high-degree nodes in graphs~\cite{Avrachenkov2012Top-k,Brautbar2010high_degree,Cooper2012high_degree,Kumar2008}, and for finding the root of a preferential attachment tree~\cite{Borgs2012pa_root}.

In this paper, we propose a new two-stage method for finding high-degree nodes in large directed networks with highly skewed in-degree distribution. We demonstrate that our algorithm outperforms other known methods by a large margin and has a better precision than the for-profit Twitter statistics {\it twittercounter.com}.

\section{Problem formulation and our contribution}

Let $V$ be a set of $N$ entities, typically users, that can be accessed using API requests. Let $W$ be another set of $M$ entities (possibly equal to $V$).  We consider a bipartite graph $(V,W,E)$, where a directed edge $(v,w)\in E$, with $v\in V$, and $w\in W$, represents a relation between $v$ and $w$. In our particular model of the Twitter graph
$V$ is a set of Twitter users, $W=V$, and $(v,w)\in E$ means that $v$ follows $w$ or that $v$ retweeted a tweet of $w$. Note that any directed graph $G=(V,E)$ can be represented equivalently by the bipartite graph $(V,V,E)$. One can also suppose that $V$ is a set of users and $W$ is a set of interest groups, while the edge $(v,w)$ represents that the user $v$ belongs to the group $w$.

Our goal is to quickly find the top in-degree entities in $W$. In this setting, throughout the paper, we use the terms `nodes', `vertices',  and `entities' interchangeably.

We propose a very simple and easy-to-implement algorithm that detects popular entities with high precision using a surprisingly small number of API requests. Most of our experiments are performed on the Twitter graph, because it is a good example of a huge network (approximately a billion of registered users) with a very limited rate of requests to API. We use only 1000 API requests to find the top-100 Twitter users with a very high precision.  We also demonstrate the efficacy of our approach on the popular Russian online social network VKontakte (\textit{vk.com}) with more than 200 million registered users. We use our algorithm to quickly detect the most popular interest groups in this social network. Our experimental analysis shows that despite of its simplicity, our algorithm significantly outperforms existing approaches, e.g., \cite{Avrachenkov2012Top-k,Brautbar2010high_degree,Kumar2008}. Moreover, our algorithm can be used in a very general setting for finding the most popular entities, while some baseline algorithms can only be used for finding nodes of largest degrees in directed \cite{Kumar2008} or undirected \cite{Avrachenkov2012Top-k} graphs.

In most social networks the degrees of entities show great variability.  This is often modeled using power laws, although it has been often argued that the classical Pareto distribution does not always fit the observed data. In our analysis we assume that the incoming degrees of the entities in $W$ are independent random variables following a {\it regularly varying} distribution $G$:
\begin{equation}\label{eq:regular}
1-G(x)=L(x)x^{-1/\gamma},\quad x>0,\; \gamma>0,
\end{equation}
where $L(\cdot)$ is a slowly varying function, that is,
\[\lim_{x\to\infty}L(tx)/L(x)=1,\quad t>0.\]
$L(\cdot)$ can be, for example, a constant or logarithmic function. We note that \eqref{eq:regular} describes a broad class of heavy-tailed distributions without imposing the rigid Pareto assumption.

An important contribution of this work is a novel analysis of the proposed algorithm that uses powerful results of the Extreme Value Theory (EVT)~--- a branch of probability that studies extreme events and properties of high order statistics in random samples. We refer to \cite{deHaan-Ferreira} for a comprehensive introduction to EVT.
Using EVT we can accurately predict the average fraction of correctly identified top-$k$ nodes and obtain the algorithm's complexity in terms of the number of nodes in $V$. We show that the complexity is sublinear if the in-degree distribution of the entities in $W$ is heavy tailed, which is usually the case in real networks.

The rest of the paper is organized as follows. In Section~\ref{sec:literature}, we give a short overview of related work. We formally describe our algorithm in Section~\ref{sec:algorithm}, then we introduce two performance measures in Section~\ref{sec:criteria}.  Section~\ref{sec:experiments} contains extensive experimental results that demonstrate the efficiency of our algorithm and compare it to baseline strategies. In Sections~\ref{sec:analysis}-\ref{sec:complexity} we present a detailed analysis of the algorithm and evaluate its optimal parameters with respect to the two performance measures. Section~\ref{sec:conclusion} concludes the paper.

\section{Related work}\label{sec:literature}

Over the last years data sets have become increasingly massive. For algorithms on such large data any complexity higher than linear (in dataset size) is unacceptable and even linear complexity may be too high. It is also well understood that an algorithm which runs in sublinear time cannot return an exact answer. In fact, such algorithms often use randomization, and then errors occur with positive probability.  Nevertheless, in practice, a rough but quick answer is often more valuable than the exact but computationally demanding solution. Therefore, sublinear time algorithms become increasingly important and many studies of such algorithms appeared in recent years (see, e.g., \cite{Sublinear1,Sublinear2,Sublinear3,Sublinear4}).

An essential assumption of this work is that the network structure is not available and has to be discovered using API requests. This setting is similar to on-line computations, where information is obtained and immediately processed while crawling the network graph (for instance the  World Wide Web). There is a large body of literature where such on-line algorithms are developed and analyzed. Many of these algorithms are developed for computing and updating the PageRank vector~\cite{Abiteboul,MonteCarloAvrachenkov,SublinearPageRank,Fagaras}. In particular, the algorithm recently proposed in \cite{SublinearPageRank} computes the PageRank vector in sublinear time. Furthermore, probabilistic Monte Carlo methods \cite{MonteCarloAvrachenkov,Bahmani2010PRMonteCarlo,Fagaras} allow to continuously update the PageRank as the structure of the Web changes.

Randomized algorithms are also used for discovering the structure of social networks. In \cite{Leskovec2006sampling} random walk methods are proposed to obtain a graph sample with similar properties as a whole graph. In \cite{Gjoka} an unbiased random walk, where each node is visited with equal probability,  is constructed in order to find the degree distribution on Facebook. Random walk based methods are also used to analyse Peer-to-Peer networks \cite{Massoulie2006}.
In \cite{Borgs2012pa_root} traceroute algorithms are proposed to find the root node and to approximate several other characteristics in a preferential attachment graph.

The problem of finding the most popular entities in large networks based only on the knowledge of a neighborhood of a current node has been analyzed in several papers. A random walk algorithm is suggested in \cite{Cooper2012high_degree} to quickly find the nodes with high degrees in a preferential attachment graph. In this case, transitions along undirected edges $x,y$ are proportional to $(d(x)d(y))^b$, where $d(x)$ is the degree of a vertex $x$ and $b > 0$ is some parameter.

In  \cite{Avrachenkov2012Top-k} a random walk with restart that uses only the information on the degree of a currently visited node was suggested for finding large degree nodes in undirected graphs. In \cite{Brautbar2010high_degree} a local algorithm for general networks, power law networks, and preferential attachment graphs is proposed for finding a node with degree, which is smaller than the maximal by a factor at most $c$.
Another crawling algorithm \cite{Kumar2008} is proposed to efficiently discover the correct set of web pages with largest incoming degrees in a fixed network and to track these pages over time when the network is changing. Note that the setting in \cite{Kumar2008} is different from ours in several aspects. For example, in our case we can use API to inquire the in-degree of any given item, while in the World Wide Web the information on in-links is not available, the crawler can only observe the in-links that come from the pages already crawled.

In Section~\ref{sec:baselines} we show that our algorithm outperforms the existing methods by a large margin.
Besides, several of the existing methods such as the ones in \cite{Avrachenkov2012Top-k} and \cite{Kumar2008} are designed specifically to discover the high degree nodes, and they cannot be easily adapted for other tasks, such as finding the most popular user categories or interest groups, while the algorithm proposed in this paper is simpler, much faster, and more generic.

To the best of our knowledge, this is the first work that presents and analyzes an efficient algorithm for retrieving the most popular entities under realistic API constraints.

\section{Algorithm description}\label{sec:algorithm}

Recall that we consider a bipartite graph $(V,W,E)$, where $V$ and $W$ are sets of entities and $(v,w)\in E$ represents a relation between the entities.

Let $n$ be the allowed number of requests to API. Our algorithm consists of two steps. We spend $n_1$ API requests on the first step and $n_2$ API requests on the second step, with $n_1+n_2=n$. See Algorithm~\ref{algo1} for the pseudocode.

\IncMargin{1em}
\begin{algorithm}
\SetKwInOut{Input}{input}\SetKwInOut{Output}{output}
\Input{Set of entities $V$ of size $N$, set of entities $W$ of size $M$, number of random nodes $n_1$ to select from $V$, number of candidate nodes $n_2$ from $W$}
\Output{Nodes $w_1, \dots w_{n_2}\in W$, their degrees $d_1, \dots, d_{n_2}$}
\BlankLine
\For{$w$ in $W$}{ $S[w] \leftarrow 0$\;}
\For{$i\leftarrow 1$ \KwTo $n_1$}{
  $v \leftarrow random(N)$\;
    \ForEach{$w$ in $OutNeighbors(v) \subset W$}{
    $S[w] \leftarrow S[w]+1$\;
  }}
$w_1, \dots, w_{n_2} \leftarrow Top\_n_2(S)$  // $S[w_1], \ldots, S[w_{n_2}]$ are the top $n_2$ maximum values in $S$\;
\For{$i\leftarrow 1$ \KwTo $n_2$}{ $d_i \leftarrow InDegree(w_i)$\;}
{\caption{Two-stage algorithm}
\label{algo1}}
\end{algorithm}\DecMargin{1em}

\textbf{First stage.} We start by sampling uniformly at random a set $A$ of $n_1$ nodes $v_1, \ldots, v_{n_1}\in V$. The nodes are sampled independently, so the same node may appear in $A$ more than once, in which case we regard each copy of this node as a different node. Note that multiplicities occur with a very small probability, approximately $1-e^{-{n_1^2}/{(2N)}}$. For each node in $A$ we record its out-neighbors in $W$. In practice, we bound the number of recorded out-links by the maximal number of IDs that can be retrieved within one API request, thus the first stage uses exactly $n_1$ API requests. For each $w\in W$ we identify $S[w]$, which is the number of nodes in $A$ that have a (recorded) edge to $w$.

\textbf{Second stage.} We use $n_2$ API requests to retrieve the actual in-degrees of the $n_2$ nodes with the highest values of $S[w]$. The idea is that the nodes with the largest in-degrees in $W$ are likely to be among the $n_2$ nodes with the largest $S[w]$. For example, if we are interested in the top-$k$ in-degree nodes in a directed graph, we hope to identify these nodes with high precision if $k$ is significantly smaller than $n_2$.

\section{Performance metrics}
\label{sec:criteria}

The main constraint of Algorithm~\ref{algo1} is the number of API requests we can use. Below we propose two performance metrics: the average fraction of correctly identified top-$k$ nodes and the first-error index.

We number the nodes in $W$ in the deceasing order of their in-degrees and denote the corresponding in-degrees by $F_1\ge F_2\ge \cdots\ge F_M$. We refer to $F_j$ as the $j$-th order statistic of the in-degrees in $W$. Further, let $S_j$ be the number of neighbors of a node $j$,  $1 \le j \le M$, among the $n_1$ randomly chosen  nodes in $V$, as described in Algorithm~\ref{algo1}. Finally, let $S_{i_1}\ge S_{i_2}\ge\ldots\ge S_{i_M}$ be the order statistics of $S_1,\ldots,S_M$. For example, $i_1$ is the node with the largest number of neighbors among $n_1$ randomly chosen nodes, although $i_1$ may not have the largest degree. Clearly, node $j$ is identified if it is in the set $\{i_1, i_2,\ldots, i_{n_2}\}$. We denote the corresponding probability by
\begin{equation}\label{eq:pj}
P_j(n_1):=\Prob(j\in \{i_1,\ldots,i_{n_2}\})\,.
\end{equation}

The first performance measure is the average fraction of correctly identified top-$k$ nodes. This is defined
in the same way as in~\cite{Avrachenkov2011Top-kPPR}:
\begin{align}
 \nonumber \E[\mbox{fraction of correctly identified top-$k$ entities}] &\\
 = \frac{1}{k}\sum_{j=1}^k P_j(n_1).&
 \label{eq:prediction}
\end{align}

The second performance measure is the first-error index, which is equal to $i$ if the top $(i-1)$ entities are identified correctly, but the top-$i$th entity is not identified. If all top-$n_2$ entities are identified correctly, we set the first-error index equal to $n_2+1$. Using the fact that for a discrete random variable $X$ with values $1,2,\ldots,K+1$ holds $\E(X)=\sum_{j=1}^{K+1}\Prob(X\ge j)$, we obtain the average first-error index as follows:
\begin{align}
\nonumber
\E[\mbox{1st-error index}]&=\sum_{j=1}^{n_2+1}\Prob(\mbox{1st-error index}\ge j)\\
&= \sum_{j=1}^{n_2+1}\prod_{l=1}^{j-1}P_l(n_1).
\label{eq:prediction1}
\end{align}

If the number $n$ of API requests is fixed, then the metrics \eqref{eq:prediction} and \eqref{eq:prediction1} involve an interesting trade-off between $n_1$ and $n_2$.
On the one hand, $n_1$ should be large enough so that the values $S_i$'s are sufficiently informative for filtering out important nodes. On the other hand, when $n_2$ is too small we expect a poor performance because the algorithm returns a top-$k$ list based mainly on the highest values of $S_i$'s, which have rather high random fluctuations. For example, on Figure~\ref{fig:fraction_correct}, when $n_2=k=100$, the algorithm returns the nodes $\{i_1,\ldots,i_{100}\}$, of which only 75\% belong to the true top-100. Hence we need to find the balance between $n_1$ and $n_2$. This is especially important when $n$ is not very large compared to $k$ (see Figure~\ref{fig:fraction_correct} with $n=1000$ and $k=250$).

\section{Experiments}\label{sec:experiments}

This section is organized as follows. First, we analyze the performance of our algorithm (most of the experiments are performed on the Twitter graph, but we also present some results on the CNR-2000 graph).
Then we compare our algorithm with baseline strategies on the Twitter graph and show that the algorithm proposed in this paper significantly outperforms existing approaches.
Finally, we demonstrate another application of our algorithm by identifying the most popular interest groups in the large online social network VKontakte.

All our experiments are reproducible: we use public APIs of online social networks and publicly available sample of a web graph.

\subsection{Performance of the proposed algorithm}

First, we show that our algorithm quickly finds the most popular users in Twitter. Formally, $V$ is a set of Twitter users, $W=V$, and $(v,w)\in E$ iff $v$ is a follower of $w$. Twitter is an example of a huge network with a very limited access to its structure.
Information on the Twitter graph can be obtained via Twitter public API.
The standard rate of requests to API is one per minute (see \textit{dev.twitter.com}).
Every vertex has an ID, which is an integer number starting from $12$.
The largest ID of a user is $\sim 1500$M (at the time when we performed the experiments).
Due to such ID assignment, a random user in Twitter can be easily chosen.
Some users in this range have been deleted, some are suspended,
and therefore errors occur when addressing the IDs of these pages. In our implementation we skip errors and assume that we do not spend resources on such nodes. The fraction of errors is approximately $30\%$.
In some online social networks the ID space can be very sparse and this makes problematic the execution
of uniform sampling in the first stage of our algorithm. In such situation we suggest to use random
walk based methods (e.g., Metropolis-Hastings random walk from \cite{Gjoka} or continuous-time random walk
from \cite{Massoulie2006}) that produce approximately uniform sampling after a burn-in period. To remove the
effect of correlation, one can use a combination of restart \cite{ART10} and thinning \cite{Avrachenkov2012Top-k,Gjoka}.

Given an ID of a user, a request to API can return one of the following: i) the number of followers (in-degree), ii) the number of followees (out-degree), or iii) at most 5000  IDs of followers or followees. If a user has more than 5000 followees, then all their IDs can be retrieved only by using several API requests. Instead, as described above, we record only the first 5000 of the followees and ignore the rest. This does not affect the performance of the algorithm because we record followees of randomly sampled users, and the fraction of Twitter users with more than 5000 followees is very small.

\begin{figure}
\centerline{\includegraphics[width = 0.5\textwidth]{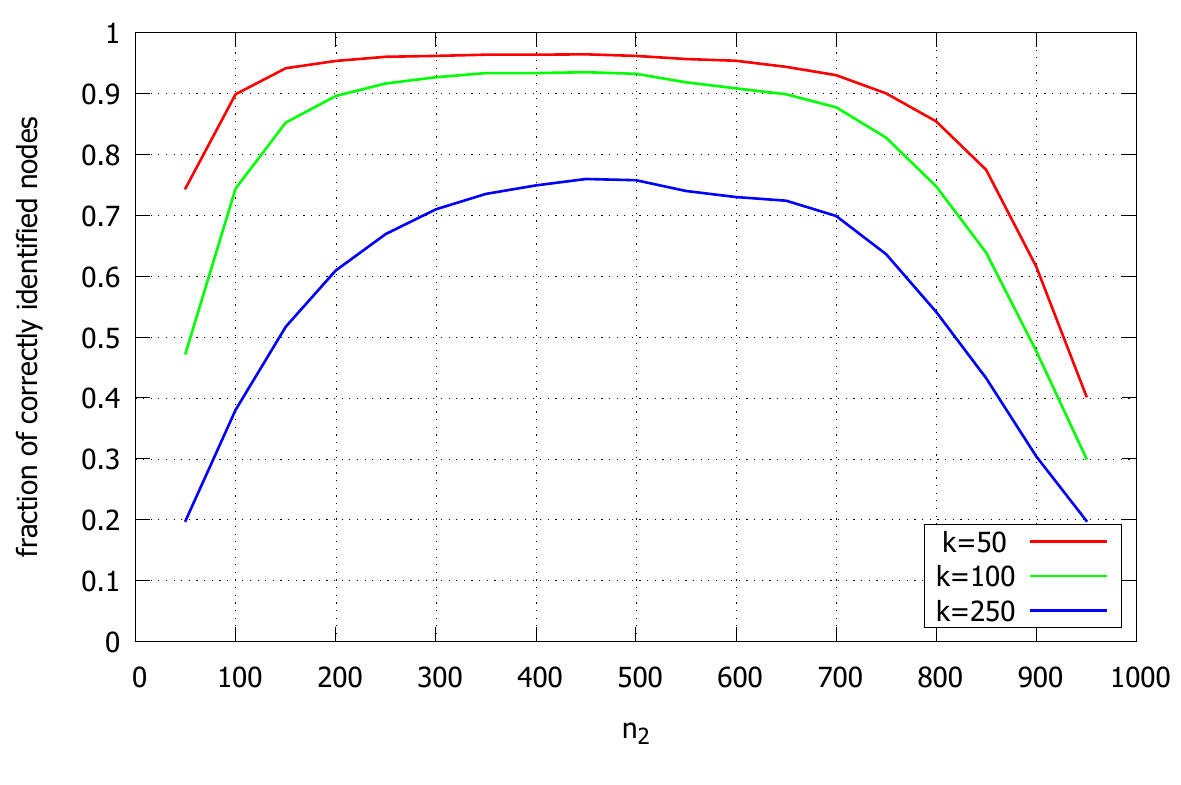}}
\caption{The fraction of correctly identified top-$k$ most followed Twitter users as a function of $n_2$, with $n=1000$.}
\label{fig:fraction_correct}
\end{figure}

In order to obtain the ground truth on the Twitter graph, we started with a top-1000 list from the publicly available source {\it twittercounter.com}. Next, we obtained a top-1000 list by running our algorithm with $n_1=n_2=20\,000$. We noticed that 1) our algorithm discovers all top-1000 users from {\it twittercounter.com}, 2) some top users identified by our algorithm are not presented in the top-1000 list on {\it twittercounter.com}. Then, we obtained the ground truth for top-1000 users by running our algorithm with ample number of API requests: $n_1 = n_2 = 500\,000$.

First we analyzed the fraction of correctly identified top-$k$ nodes  (see Equation~\eqref{eq:prediction}). Figure~\ref{fig:fraction_correct} shows the average fraction of correctly identified top-$k$ users for different $k$  over 100 experiments, as a function of $n_2$, when $n=1000$, which is very small compared to the total number of users. Remarkably we can find the top-50 users with very high precision. Note that, especially for small $k$, the algorithm has a high precision in a large range of parameters.

We also looked at the first-error index (see Equation~\eqref{eq:prediction1}), i.e., the position of the first error in the top list. Again, we averaged the results over 100 experiments. Results are shown on Figure~\ref{fig:first_mistake} (red line). Note that with only 1000 API requests we can (on average) correctly identify more than 50 users without any omission.

\begin{figure}
\centerline{\includegraphics[width = 0.5\textwidth]{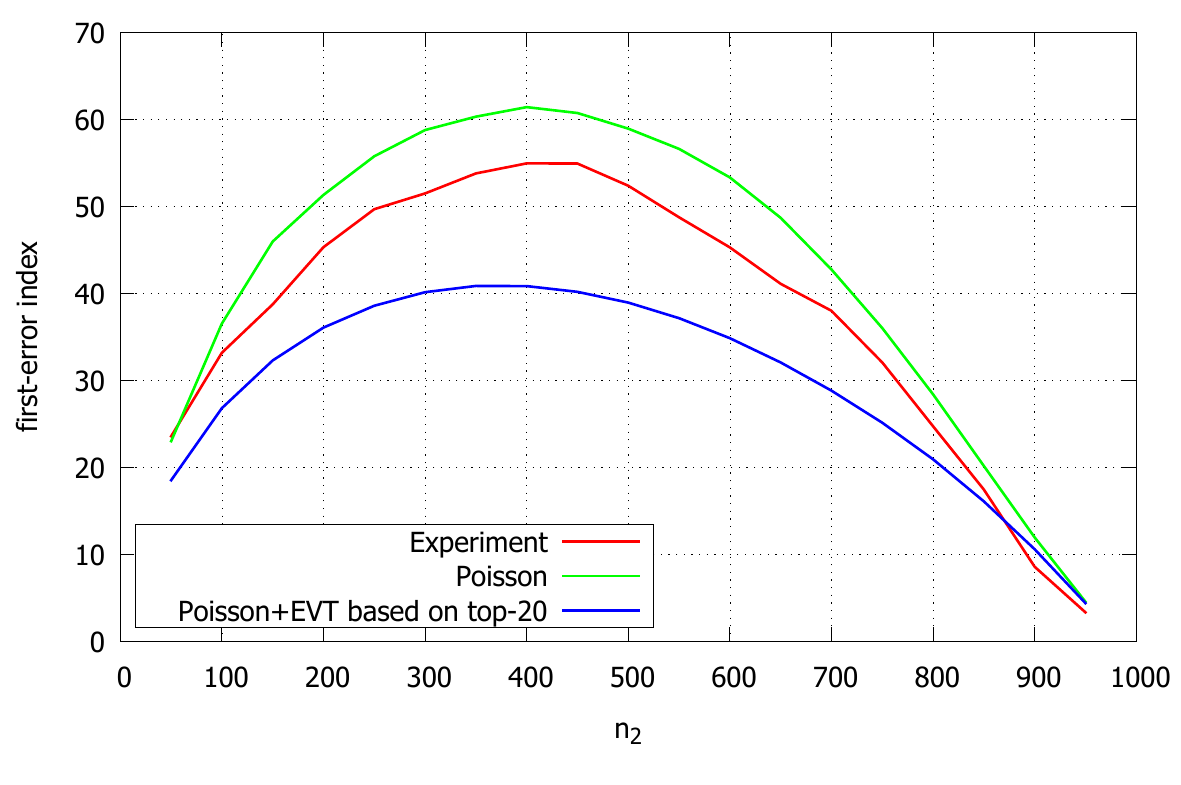}}
\caption{The first-error index as a function of $n_2$, with $n=1000$, on Twitter.}
\label{fig:first_mistake}
\end{figure}

\begin{figure}
\centerline{\includegraphics[width = 0.5\textwidth]{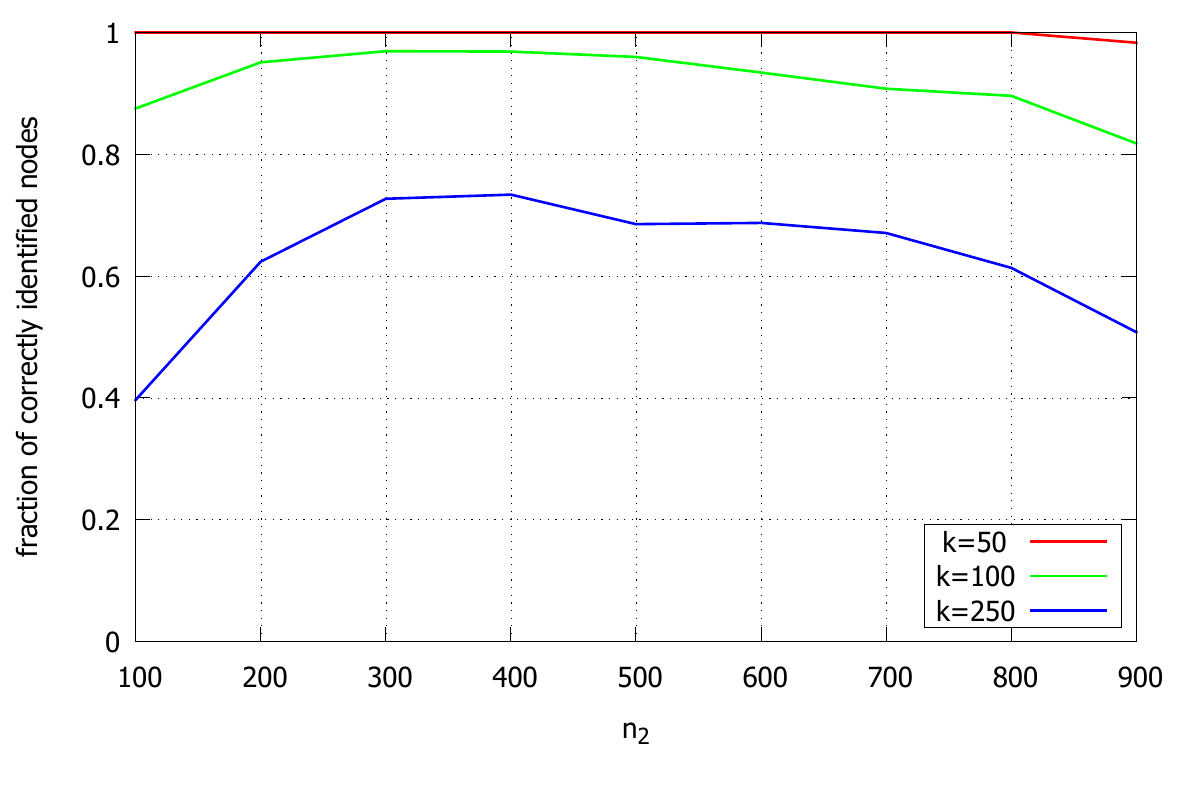}}
\caption{The fraction of correctly identified top-$k$ in-degree nodes in the CNR-2000 graph as a function of $n_2$, with $n=1000$.}
\label{fig:CNR}
\end{figure}

Although in this paper we mostly focus on the Twitter graph (since it is a huge network with a very limited rate of requests to API), we also demonstrated the performance of our algorithm on CNR-2000 graph ({\it law.di.unimi.it/webdata/cnr-2000}). This graph is a sample of the Italian CNR domain. It is much smaller and there are no difficulties in obtaining the ground truth here.
We get very similar results for this graph (see Figure~\ref{fig:CNR}).
Interestingly, the performance of the algorithm is almost insensitive to the network size: the algorithm performs similarly on the network with a billion nodes as on the network with half a million nodes.

\subsection{Comparison with baseline algorithms}\label{sec:baselines}

Literature suggests several solutions for the problem studied here. Not every solution is feasible in the setting of a large unknown realistic network. For example, random-walk-based algorithms that require the knowledge of the degrees of all neighbors of a currently visited node, such as the one in \cite{Cooper2012high_degree}, are not applicable. Indeed if we want to make a transition from a vertex of degree $d$, we need at least $d$ requests to decide where to go. So once the random walk hits a vertex of high degree, we may spend all the allowed resources on just one transition of the random walk.
In this section, we compare our algorithm with the algorithms suggested in \cite{Avrachenkov2012Top-k}, \cite{Brautbar2010high_degree}, and \cite{Kumar2008}. We start with the  description of these  algorithms.

\textbf{RandomWalk}~\cite{Avrachenkov2012Top-k}.

The algorithm in \cite{Avrachenkov2012Top-k} is a randomized algorithm for undirected graphs that
finds a top-$k$ list of nodes with largest degrees in sublinear time. This algorithm is based on a random walk with uniform jumps, described by the following transition probabilities \cite{ART10}:
\begin{equation}\label{eq:probrestart}
p_{ij} = \left\{ \begin{array}{ll}
\frac{\alpha/N+1}{d_i+\alpha}, & \mbox{if $i$ has a link to $j$},\\
\frac{\alpha/N}{d_i+\alpha}, & \mbox{if $i$ does not have a link to $j$},
\end{array}\right.
\end{equation}
where $N$ is the number of nodes in the graph and $d_i$ is the degree of node $i$. The parameter $\alpha$ controls how often the random walk makes an artificial jump. In \cite{Avrachenkov2012Top-k} it is suggested to take $\alpha$ equal to the average degree in order to maximize the number of independent samples, where the probability of sampling a node is proportional to its degree.
After $n$ steps of the random walk, the algorithm returns top-$k$ degree nodes from the set of all visited nodes. See Algorithm~\ref{algo2} for formal description.

\IncMargin{1em}
\begin{algorithm}
\SetKwInOut{Input}{input}\SetKwInOut{Output}{output}
\Input{Undirected graph $G$ with $N$ nodes, number of steps $n$, size of output list $k$, parameter $\alpha$}
\Output{Nodes $v_1, \dots v_{k}$, their degrees $d_1, \dots, d_{k}$}
\BlankLine
$v \leftarrow random(N)$\;
$A \leftarrow Neighbors(v)$\;
$D[v] \leftarrow size(A)$\;
\For{$i\leftarrow 2$ \KwTo $n$}{
    $r \xleftarrow{\text{sample}} U[0, 1]$\;
       \If{$r < \frac{D[v]}{D[v]+\alpha}$}{
         $v \leftarrow \text{ random from  }A$\;
       }
       \Else{
         $v \leftarrow random(N)$\;}
  $A \leftarrow Neighbors(v)$\;
  $D[v] \leftarrow size(A)$\; }
$v_1, \dots, v_{k} \leftarrow Top\_k(D)$  // $D[v_1], \ldots, D[v_{k}]$ are the top $k$ maximum values in $D$\;
\caption{RandomWalk}\label{algo2}
\end{algorithm}\DecMargin{1em}

Note that Algorithm~\ref{algo2} works only on undirected graphs. In our implementation on Twitter, all links in the Twitter graph are treated as undirected, and the algorithm returns the top-$k$ in-degree visited vertices. The idea behind this is that the random walk will often find users with large total number of followers plus followees, and since the number of followers of popular users is usually much larger than the number of followees, the most followed users will be found.
Another problem of Algorithm~\ref{algo2} in our experimental settings is that it needs to request IDs of all neighbors of a visited node in order to follow a randomly chosen link, while only limited number of IDs can be obtained per one API request (5000 in Twitter). For example, the random walk will quickly find a node with 30M followers, and we will need 6K requests to obtain IDs of all its neighbors. Therefore, an honest implementation of Algorithm~\ref{algo2} usually finds not more than one vertex from top-100. Thus, we have implemented two versions of this algorithm: strict and relaxed. One step of the strict version is one API request, one step of the relaxed version is one considered vertex. Relaxed algorithm runs much longer but shows better results. For both algorithms we took $\alpha = 100$, which is close to twice the average out-degree in Twitter.

\textbf{Crawl-Al and Crawl-GAI}~\cite{Kumar2008}.

We are given a directed graph $G$ with $N$ nodes.
At each step we consider one node and ask for its outgoing edges. At every step all nodes have their \textit{apparent in-degrees} $S_j$, $j=1,\ldots,N$: the number of discovered edges pointing to this node. In Crawl-Al the next node to consider is a random node, chosen with probability proportional to its apparent in-degree. In Crawl-GAI, the next node is the node with the highest apparent in-degree. After $n$ steps we get a list of nodes with largest apparent in-degrees. See Algorithm~\ref{Crawl-GAI} for the pseudocode of Crawl-GAI.

\IncMargin{1em}
\begin{algorithm}
\SetKwInOut{Input}{input}\SetKwInOut{Output}{output}
\Input{Directed graph $G$ with $N$ nodes, number of steps $n$, size of output list $k$}
\Output{Nodes $v_1, \dots v_{k}$}
\BlankLine
\For{$i\leftarrow 1$ \KwTo $N$}{ $S[i] \leftarrow 0$\;}
\For{$i\leftarrow 1$ \KwTo $n$}{
$v \leftarrow \mathrm{argmax}(S[i])$\;
$A \leftarrow OutNeighbors(v)$\;
\ForEach{$j$ in $A$}{
    $S[j] \leftarrow S[j]+1$\;
  }}
$v_1, \dots, v_{k} \leftarrow Top\_k(S)$  // $S[v_1], \ldots, S[v_{k}]$ are the top $k$ maximum values in $S$\;
\caption{Crawl-GAI}\label{Crawl-GAI}
\end{algorithm}\DecMargin{1em}

\textbf{HighestDegree}~\cite{Brautbar2010high_degree}.

A strategy which aims at finding the vertex with largest degree is suggested in \cite{Brautbar2010high_degree}. In our experimental setting with a limited number of API requests this algorithm can be presented as follows. While we have spare resources we choose random vertices one by one and then check the degrees of their neighbors. If the graph is directed, then we check the incoming degrees of out-neighbors of random vertices. See Algorithm~\ref{HighDegree} for the pseudocode of the directed version of this algorithm.


\IncMargin{1em}
\begin{algorithm}
\SetKwInOut{Input}{input}\SetKwInOut{Output}{output}
\Input{Directed graph $G$ with $N$ nodes, number of steps $n$, size of output list $k$}
\Output{Nodes $v_1, \dots v_{k}$, their degrees $d_1, \dots, d_{k}$}
\BlankLine
$s \leftarrow 0$\;
\For{$i\leftarrow 1$ \KwTo $n$}{\
  \If{$s=0$}{
        $v \leftarrow random(N)$\;
        $A \leftarrow OutNeighbors(v)$\;
        $s \leftarrow size(A)$\; }
  \Else{
        $D[A[s]] \leftarrow InDeg(A[s])$\;
        $s \leftarrow s-1$\;
        }
        }
$v_1, \dots, v_{k} \leftarrow Top\_k(D)$  // $D[v_1], \ldots, D[v_{k}]$ are the top $k$ maximum values in $D$\;
\caption{HighestDegree}\label{HighDegree}
\end{algorithm}\DecMargin{1em}

\medskip

The algorithms Crawl-AI, Crawl-GAI and HighestDegree find nodes of large in-degrees, but crawl only out-degrees that are usually much smaller. Yet these algorithms can potentially suffer from the API constraints, for example, when in-degrees and out-degrees are positively dependent so that large in-degree nodes tend have high number of out-links to be crawled. In order to avoid this problem on Twitter, we limit the number of considered out-neighbors by 5000 for these algorithms.

In the remainder of this section we compare our Algorithm~\ref{algo1} to the baselines on the Twitter follower graph.

The first set of results is presented in Table~\ref{tab:comparisonTwitter}, where we take the same budget (number of request to API) $n=1000$ for all tested algorithms to compare their performance. If the standard rate of requests to Twitter API (one per minute) is used, then 1000 requests can be made in 17 hours. For the algorithm suggested in this paper  we took $n_1 = 700$, $n_2=300$.

As it can be seen from Table~\ref{tab:comparisonTwitter}, Crawl-GAI algorithm, that always follows existing links, seems to get stuck in some densely connected cluster. Note that Crawl-AI, which uses randomization, shows much better results. Both Crawl-GAI and Crawl-AI base their results only on apparent in-degrees. The low precision indicates that due to randomness apparent in-degrees of highest in-degree nodes are often not high enough. Clearly, the weakness of these algorithms is that the actual degrees of the crawled nodes remain unknown. Algorithm~\ref{algo2}, based on a random walk with jumps, uses API requests to retrieve IDs of all neighbors of a visited node, but only uses these IDs to choose randomly the next node to visit. Thus, this algorithm very inefficiently spends the limited budget for API requests.
Finally, HighestDegree uses a large number of API requests to check in-degrees of all neighbors of random nodes, so it spends a lot of resources on unpopular entities.

Our Algorithm~\ref{algo1} greatly outperforms the baselines. The reason is that it  has several important advantages: 1) it is insensitive to correlations between degrees; 2) when we retrieve IDs of the neighbors of a random node (at the first stage of the algorithm), we increase their count of $S$, hence we do not lose any information; 3) sorting by $S[w]$ prevents the waste of resources on checking the degrees of unpopular nodes at the second stage;  4) the second stage of the algorithm returns the exact degrees of nodes, thus, to a large extent, we eliminate the randomness in the values of $S$.

\begin{table}
\begin{center}
\caption{Percentage of correctly identified nodes from top-100 in Twitter averaged over 30 experiments, $n = 1000$}\label{tab:comparisonTwitter}
\begin{tabular}{|l|c|c|}
\hline
Algorithm  & mean & standard deviation \\
\hline
\hline
Two-stage algorithm & 92.6 & 4.7 \\
\hline
RandomWalk (strict) & 0.43 & 0.63 \\
\hline
RandomWalk (relaxed) & 8.7 & 2.4 \\
\hline
Crawl-GAI & 4.1 & 5.9 \\
\hline
Crawl-AI & 23.9 &  20.2  \\
\hline
HighestDegree & 24.7 & 11.8 \\
\hline
\end{tabular}
\end{center}
\end{table}

On Figure~\ref{fig:dynamic_quality} we compare the average performance of our algorithm with the average performance of the baseline strategies for different values of $n$ (from 100 to 5000 API requests). For all values of $n$ our algorithm outperforms other strategies.

\begin{figure}
\centerline{\includegraphics[width = 0.5\textwidth]{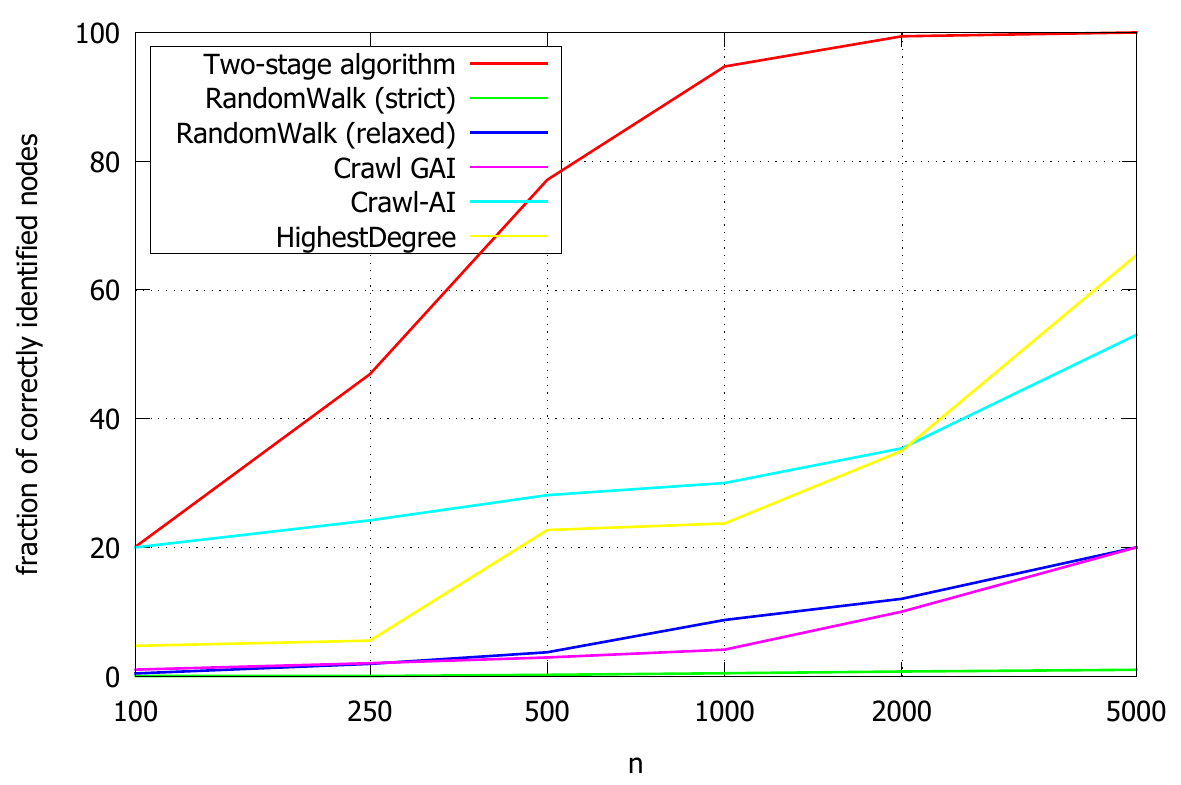}}
\caption{The fraction of correctly identified top-$100$ most followed Twitter users as a function of $n$ averaged over 10 experiments.}
\label{fig:dynamic_quality}
\end{figure}

\subsection{Finding the largest interest groups}\label{sec:groups}
\label{sec:groups}

In this section, we demonstrate another application of our algorithm: finding the largest interest groups in online social networks.
In some social networks there are millions of interest groups and crawling all of them may not be possible.
Using the algorithm proposed in this paper, the most popular groups may be discovered with a very small number of requests to API.
In this case, let $V$ be a set of users, $W$ be a set of interest groups, and $(v,w)\in E$ iff $v$ is a member of $w$.

Let us demonstrate that our algorithm allows to find the most popular interest groups in the large social network VKontakte with more than 200M registered users. As in the case of Twitter, information on the VKontakte graph can be obtained via API. Again, all users have IDs: integer numbers starting from $1$. Due to this ID assignment, a random user in this network can be easily chosen. In addition, all interest groups also have their own IDs.

We are interested in the following requests to API: i) given an ID of a user, return his or her interest groups, ii) given an ID of a group return its number of members.
If for some ID there is no user or a user decides to hide his or her list of groups, then an error occurs. The portion of such errors is again approximately $30\%$.

As before, first we used our algorithm with $n_1=n_2=50\,000$ in order to obtain the ground truth for the top-100 most popular groups (publicly available sources give the same top-100). Table~\ref{tab:groups} presents some statistics on the most popular groups.

\begin{table}[htb]
\begin{center}
\caption{The most popular groups for VKontakte}\label{tab:groups}
\begin{tabular}{|l|c|c|}
\hline
Rank  & Number of participants & Topic \\
\hline
\hline
1 & 4,35M & humor \\
\hline
2 & 4,10M  & humor \\
\hline
3 & 3,76M  & movies \\
\hline
4 & 3,69M & humor \\
\hline
5 & 3,59M  & humor \\
\hline
6 & 3,58M  & facts \\
\hline
7 & 3,36M  & cookery \\
\hline
8 & 3,31M & humor \\
\hline
9 & 3,14M  & humor \\
\hline
10 & 3,14M  & movies \\
\hline
\hline
100 & 1,65M  & success stories  \\
\hline
\end{tabular}
\end{center}
\end{table}

Then, we took $n_1 = 700$, $n_2 = 300$ and computed the fraction of correctly identified groups from top-100. Using only 1000 API requests, our algorithm identifies on average $73.2$ groups  from the  top-100 interest groups (averaged over 25 experiments). The standard deviation is $4.6$.

\section{Performance predictions}\label{sec:analysis}

In this section, we evaluate the performance of Algorithm~1 with respect to the metrics (\ref{eq:prediction}) and (\ref{eq:prediction1}) as a function of the algorithm's parameters $n_1$ and $n_2$.

Recall that without loss of generality the nodes in $W$ can be numbered $1,2,\ldots, M$ in the decreasing order of their in-degrees, $F_j$ is the unknown in-degree of a node $j$, and $S_j$ is the number of followers of a node $j$ among the randomly chosen $n_1$ nodes in $V$.

As prescribed by Algorithm~\ref{algo1}, we pick $n_1$ nodes in $V$ independently and uniformly at random with replacement. If we label all nodes from $V$ that have an edge to $j \in W$, then $S_j$ is exactly the number of labeled nodes in a random sample of $n_1$ nodes, so its distribution is $Binomial\Big(n_1,\frac{F_j}{N}\Big)$.
Hence we have
\begin{equation}
\label{eq:sj_exp}
\E(S_j)=n_1\,\frac{F_j}{N},\quad {\rm Var}(S_j)=n_1\,\frac{F_j}{N}\,\Big(1-\frac{F_j}{N}\Big).
\end{equation}

We are interested in predictions for the metrics (\ref{eq:prediction}) and (\ref{eq:prediction1}). These metrics are completely determined by the probabilities $P_j(n_1)$, $j=1,\ldots,k$, in (\ref{eq:pj}).
The expressions for $P_j(n_1)$, $j=1,\ldots,k$, can be written in a closed form, but they are computationally intractable because they involve the order statistics of $S_1,S_2,\ldots,S_M$. Moreover, these expressions depend on the unknown in-degrees $F_1,F_2,\ldots,F_M$.

We suggest two predictions for (\ref{eq:prediction}) and (\ref{eq:prediction1}). First, we give a {\it Poisson}  prediction that is based on the unrealistic assumption that the degrees $F_1,\ldots,F_{n_2}$ are known, and replaces the resulting expression for (\ref{eq:prediction}) and (\ref{eq:prediction1}) by an alternative expression, which is easy to compute. Next, we {suggest} an {\it Extreme Value Theory} (EVT) prediction that {does not require any preliminary knowledge of unknown degrees but uses the top-$m$ values of highest degrees obtained by the algorithm, where $m$ is much smaller than $k$.}

\subsection{Poisson predictions}\label{sec:Poisson}

First, for $j=1,\ldots,k$ we write
\begin{multline}
\label{eq:decompose}
P_j(n_1) =\\= \Prob(S_j>S_{i_{n_2}})+\Prob(S_j=S_{i_{n_2}},j\in\{i_1,\ldots,i_{n_2}\}).
\end{multline}
Note that if $[S_j>S_{i_{n_2}}]$ then the node $j$ will be selected by the algorithm, but if $[S_j=S_{i_{n_2}}]$, then this is not guaranteed and even unlikely. This observation is illustrated by the following example.

\begin{Example}\label{example:1} Consider the Twitter graph and take $n_1=700$, $n_2=300$. Then the average number of nodes $i$ with $S_i=1$ among the top-$l$ nodes is
\[
\sum_{i=1}^{l} \Prob(\mbox{$S_i=1$})
= \sum_{i=1}^{l} 700\,\frac{F_i}{10^9}\left(1-\frac{F_i}{10^9}\right)^{699},
\]
which is $223.3$ for $l=1000$, and it is $19.93$ for $l=n_2=300$. Hence, in this example, we usually see $[S_{i_{300}}= 1]$, however, only a small fraction of nodes with $[S_i=1]$ is selected (on a random basis) into the set $\{i_1,\ldots,i_{300}\}$.
\end{Example}

Motivated by the above example, we suggest to approximate $P_j(n_1)$ in (\ref{eq:decompose}) by its first term $\Prob(S_j>S_{i_{n_2}})$.

Next, we employ the fact that $S_{n_2}$ has the $n_2$-th highest average value among $S_1,\ldots, S_M$, and we suggest to use $S_{n_2}$ as a proxy for the order statistic $S_{i_{n_2}}$. However, we cannot replace $\Prob(S_j>S_{i_{n_2}})$ directly by $\Prob(S_j>S_{n_2})$ because the latter includes the case $[S_j>S_{n_2}=0]$, while with a reasonable choice of parameters it is unlikely to observe $[S_{i_{n_2}}=0]$.  This is not negligible as, e.g., in Example~\ref{example:1} we have $\Prob(S_{n_2}=0)\approx 0.06$. Hence, we propose to approximate $\Prob(S_j>S_{i_{n_2}})$ by $P(S_j>\max\{S_{{n_2}},1\})$, $j=1,\ldots,n_2$.

As the last simplification, we approximate the binomial random variables $S_j$'s by independent Poisson random variables. The Poisson approximation is justified because even for $j = 1, \ldots, k$ the value $F_j/N$ is small enough. For instance, in Example~\ref{example:1} we have $F_1/N\approx 0.04$, so $n_1 F_1/N$ is {$700\cdot 0.04=28$}.

Thus, summarizing the above considerations, we propose to replace $P_j(n_1)$ in (\ref{eq:prediction}) and (\ref{eq:prediction1}) by
\begin{equation}\label{eq:phat} \hat{P}_j(n_1)=\Prob(\hat{S}_j>\max\{\hat{S}_{{n_2}},1\}),\; j=1,\ldots,n_2,\end{equation}
where $\hat{S}_1,\ldots,\hat{S}_{n_2}$ are independent Poisson random variables with parameters $n_1F_1/N,\ldots, n_1F_{n_2}/N$. We call this method a Poisson prediction for (\ref{eq:prediction}) and (\ref{eq:prediction1}).

On Figures~\ref{fig:first_mistake}~and~\ref{fig:prediction}~the results of the Poisson prediction are shown by the green line. We see that these predictions closely follow the experimental results (red line).

\subsection{EVT predictions}\label{sec:EVT}
\label{eq:evt}

{Denote by $\hat F_1>\hat F_2>\cdots>\hat F_k$ the top-$k$ values obtained by the algorithm.}

Assume that the actual in-degrees in $W$ are randomly sampled from the distribution $G$ that satisfies (\ref{eq:regular}). Then $F_1>F_2>  \cdots>F_M$ are the order statistics of $G$.
{The EVT techniques allow to predict high quantiles of $G$ using the top values of $F_i$'s~\cite{Dekkers1989moment_estimator}. However, since the correct values of $F_i$'s are not known, we instead use the obtained top-$m$ values $\hat F_1, \hat F_2, \ldots, \hat F_m$, where $m$ is much smaller than $k$. This is justified for two reasons. First, given $F_j$, $j<k$, the estimate $\hat F_j$ converges to $F_j$ almost surely as $n_1\to\infty$, because, in the limit, the degrees can be ordered correctly using $S_i$'s only according to the strong law of large numbers.} {Second, when $m$ is small, the top-$m$ list can be found with high precision even when $n$ is very modest. For example, as we saw on Figure~\ref{fig:fraction_correct}, we find 50 the most followed Twitter users with very high precision using only 1000 API requests.}

{Our goal is to estimate $\hat{P}_j(n_1)$, $j=1,\ldots,k$, using only the values $\hat F_1,\ldots,\hat F_m$, $m<k$. To this end, we suggest to first estimate the value of $\gamma$ using the classical Hill's estimator $\hat{\gamma}$ \cite{Hill} based on the top-$m$ order statistics:}


\begin{equation}
\label{eq:hill}
\hat{\gamma}=\frac{1}{m-1}\sum_{i=1}^{m-1}(\log(\hat F_i)-\log(\hat F_m)).
\end{equation}

Next, we use the quantile estimator, given by formula (4.3) in \cite{Dekkers1989moment_estimator}, but we replace their two-moment estimator by the Hill's estimator in (\ref{eq:hill}). This is possible because both estimators are consistent (under slightly different conditions).  Under the assumption $\gamma>0$, we have the following estimator {$\hat f_j$} for the $(j-1)/M$-th quantile of $G$:
\begin{equation}
\label{eq:fj}
\hat f_j=\hat F_m\left(\frac{m}{j-1}\right)^{\hat{\gamma}},\qquad j>1, j<<M.
\end{equation}
We propose to use $\hat f_j$ as a prediction of the correct values $F_j$, $j=m+1,\ldots, n_2$.

Summarising the above, we suggest the following prediction procedure, which we call EVT prediction.
\begin{enumerate}
\item Use Algorithm~\ref{algo1} to find the top-$m$ list, $m << k$.
\item {Substitute the identified $m$ highest degrees $\hat F_1, \hat F_2,\ldots,\hat F_m$} in (\ref{eq:hill}) and (\ref{eq:fj}) in order to compute, respectively, $\hat\gamma$ and $\hat f_j$, $j=m+1,\ldots,n_2$.
\item Use the Poisson prediction (\ref{eq:phat}) substituting the values $F_1,\ldots, F_{n_2}$ by {$\hat F_1,\ldots, \hat F_m$, $\hat f_{m+1},\ldots, \hat f_{n_2}$.}
\end{enumerate}

On Figures~\ref{fig:prediction}~and~\ref{fig:first_mistake} the blue lines represent the EVT predictions, with $k=100$, $m=20$ and different values of $n_2$. For the average fraction of correctly identified nodes, depicted on Figure~\ref{fig:prediction}, we see that the EVT prediction is very close to the Poisson prediction and the  experimental results. The predictions for the first error index on Figure~\ref{fig:first_mistake} are less accurate but the shape of the curve and the optimal value of $n_2$ is captured correctly by both predictors. Note that the EVT prediction tends to underestimate the performance of the algorithm for a large range of parameters. This is because in Twitter the highest degrees are closer to each other than the order statistics of a regularly varying distribution would normally be, which results in an underestimation of $\gamma$ in (\ref{eq:hill}) if only a few top-degrees are used.

Note that the estimation \eqref{eq:fj} is inspired but not entirely justified by \cite{Dekkers1989moment_estimator} because the consistency of the proposed quantile estimator (\ref{eq:fj}) is only proved for $j<m$, while we want to use it for $j>m$. However, we see that this estimator agrees well with the data.

\begin{figure}
\centerline{\includegraphics[width = 0.5\textwidth]{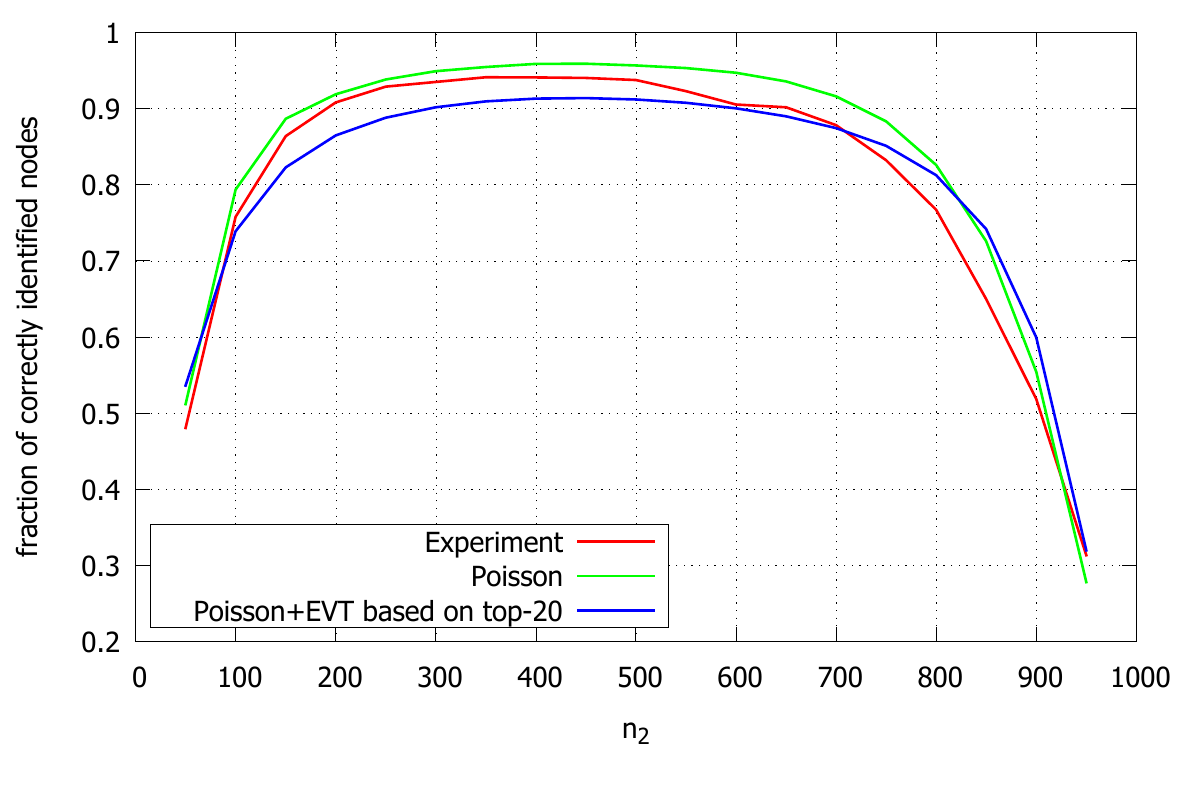}}
\caption{Fraction of correctly identified nodes out of top-100 most followed users in Twitter as a function of $n_2$, with $n=1000$.} 
\label{fig:prediction}
\end{figure}

\section{Optimal scaling for algorithm parameters}

In this section, our goal is to find the ratio $n_2$ to $n_1$ which maximizes the performance of Algorithm~\ref{algo1}. For simplicity, as a performance criterion we consider the expected fraction of correctly identified nodes from the top-$k$ list (see Equation~\eqref{eq:prediction}):
$$
\maximize_{n_1, n_2: n_1+n_2=n}\frac{1}{k}\sum_{j=1}^k P_j(n_1)\,.
$$

We start with  stating the optimal scaling for $n_1$. Let us consider the number of nodes with $S_j>0$ after the first stage of the algorithm. Assuming that the out-degrees of randomly chosen nodes in $V$ are independent, by the strong law of large numbers we have \[\limsup_{n_1\to\infty}\frac{1}{n_1}\sum_{j=1}^MI\{S_j>0\} \le \mu \quad \mbox{ with probability 1}, \] where $\mu$ is the average out-degree in $V$ and $I\{A\}$ is an indicator of the event $A$. Thus, there is no need to check more than $n_2=O(n_1)$ nodes on the second stage, which directly implies the next proposition.

\begin{prop}
\label{prop:n1} It is optimal to choose $n_1$ such that $n=O(n_1)$.
\end{prop}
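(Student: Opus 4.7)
My plan is to formalize the heuristic already given in the paragraph preceding the proposition. The key object is the number of "touched" nodes after the first stage: $T := |\{j \in W : S_j > 0\}|$. The second stage is only useful on nodes that have been touched, since any node with $S_j = 0$ is statistically indistinguishable from all other untouched nodes and its in-degree lookup carries no information about whether it is in the true top-$k$. So the argument will be: (i) bound $T$ from above in terms of $n_1$, and (ii) conclude that the optimal $n_2$ is at most of order $n_1$, whence $n = n_1 + n_2 = O(n_1)$.

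For step (i), I would observe that every $j$ contributing to $T$ must receive at least one of the edges recorded during stage one, so
\[
T \;\le\; \sum_{i=1}^{n_1} d_{\mathrm{out}}(v_i),
\]
where $v_1,\ldots,v_{n_1}$ are the sampled nodes in $V$ (recall that stage one records at most one API-request worth of out-links per sample). Assuming the out-degrees of i.i.d.\ uniform samples from $V$ are i.i.d.\ with mean $\mu$, the strong law of large numbers gives $\frac{1}{n_1}\sum_{i=1}^{n_1} d_{\mathrm{out}}(v_i) \to \mu$ almost surely, which is exactly the displayed inequality in the text. Hence $T \le (\mu + o(1)) n_1$ with probability one as $n_1 \to \infty$.

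For step (ii), I would argue that choosing any $n_2 > T$ is strictly dominated (in terms of both performance metrics \eqref{eq:prediction} and \eqref{eq:prediction1}) by the choice $n_2' = T$ together with reallocating the remaining $n_2 - T$ requests to the first stage. The reason is that the extra $n_2 - T$ API requests in stage two are spent on nodes with $S_{i_j} = 0$, which are just a uniformly random subset of untouched nodes, so those requests increase $P_j(n_1)$ only through the tiny chance that a true top-$k$ node happens to be among the lucky untouched ones picked at random — a contribution that vanishes compared to the gain obtained by enlarging $n_1$. Consequently the optimum satisfies $n_2 \le T \le (\mu+o(1))\,n_1$, which yields $n = n_1 + n_2 = O(n_1)$.

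The main obstacle I foresee is the rigorous dominance argument in step (ii): a fully clean proof would require showing that $P_j(n_1)$ is monotonically non-decreasing in $n_1$ for fixed budget, whereas swapping one stage-two request for one stage-one request does both. This needs a coupling between the algorithm run at $(n_1, n_2)$ and at $(n_1+1, n_2-1)$. In the spirit of the paper, however, I would keep the argument at the informal level given in the excerpt and simply invoke the bound $T \le (\mu+o(1))n_1$ together with the observation that checking untouched nodes in stage two is uninformative, which is precisely what the authors state when they write that "there is no need to check more than $n_2 = O(n_1)$ nodes on the second stage."
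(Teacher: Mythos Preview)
Your proposal is correct and follows essentially the same approach as the paper: the paper's entire argument is the paragraph you quote, namely bounding $\sum_{j}I\{S_j>0\}$ by the total out-degree of the $n_1$ sampled nodes, invoking the strong law of large numbers to get the $\mu n_1$ bound, and concluding informally that ``there is no need to check more than $n_2=O(n_1)$ nodes on the second stage.'' Your step~(ii) dominance discussion and the coupling remark go beyond what the paper attempts; the authors leave that part at exactly the heuristic level you describe.
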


As we noted before (see, e.g., Figure~\ref{fig:fraction_correct}), for small $k$ the algorithm has a high precision in a large range of parameters. However, for not too small values of $k$, the optimization becomes important. In particular, we want to maximize the value $P_k(n_1)$. We prove the following theorem.

\begin{theorem}
\label{th:n2}
Assume that $k=o(n)$ as $n \to \infty$, then the maximizer of the probability $P_k(n-n_2)$ is
$$
n_2= \left( 3 \gamma k^{\gamma} n \right)^{\frac{1}{\gamma+1}}\left(1 + \mbox{o}(1)\right),
$$
with $\gamma$ as in $(\ref{eq:regular})$.
\end{theorem}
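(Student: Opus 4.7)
\textbf{Proof plan for Theorem~\ref{th:n2}.} My approach is to reduce the maximization of $P_k(n-n_2)$ to the tractable Poisson prediction $\hat P_k$ from Section~\ref{sec:Poisson} and then to perform a standard calculus optimization on an explicit asymptotic expression. First, I would replace $P_k(n_1)$ by $\hat P_k(n_1)=\Prob(\hat S_k>\max\{\hat S_{n_2},1\})$, where $\hat S_j$ are independent Poisson random variables with rates $\lambda_j=n_1F_j/N$; in the regime $k=o(n)$ the parameters $\lambda_k$ and $\lambda_{n_2}$ will grow with $n$, so the $\max\{\cdot,1\}$ correction is asymptotically negligible and $\hat P_k(n_1)\sim\Prob(\hat S_k>\hat S_{n_2})$.

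Second, I would invoke the normal approximation for Poisson variables with large means, giving
\begin{equation*}
\Prob(\hat S_k>\hat S_{n_2})\;\sim\;\Phi\!\left(\frac{\lambda_k-\lambda_{n_2}}{\sqrt{\lambda_k+\lambda_{n_2}}}\right).
\end{equation*}
Since $\Phi$ is strictly increasing, maximizing the probability is equivalent to maximizing the argument $z(n_2)$. Third, I would use the EVT asymptotics for the order statistics of the regularly varying distribution $G$ in \eqref{eq:regular}, which give $F_j=L^{*}(M/j)(M/j)^{\gamma}$ for some slowly varying $L^{*}$, so that (absorbing the constant $L^{*}(M)M^{\gamma}/N$ into a prefactor $B$) one may write $\lambda_j\sim Bn_1 j^{-\gamma}$. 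Setting $u=(k/n_2)^{\gamma}$ yields
\begin{equation*}
z(n_2)\;\sim\;\sqrt{B(n-n_2)k^{-\gamma}}\cdot\frac{1-u}{\sqrt{1+u}}.
\end{equation*}

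Fourth, I would take logarithms and differentiate in $n_2$. Using the expansion $(1-u)(1+u)^{-1/2}=1-\tfrac{3}{2}u+O(u^{2})$ valid for $u\to 0$, the stationarity condition $\tfrac{d}{dn_2}\log z=0$ reads
\begin{equation*}
\frac{1}{2(n-n_2)}\;=\;\frac{(3\gamma/2)\,k^{\gamma}n_2^{-\gamma-1}}{1-\tfrac{3}{2}(k/n_2)^{\gamma}}\,,
\end{equation*}
which rearranges to $n_2^{\gamma+1}=3\gamma k^{\gamma}(n-n_2)(1+o(1))$. Finally, I would verify the self-consistency of the assumptions $k\ll n_2\ll n$ used in the expansion: the solution $n_2\sim(3\gamma k^{\gamma}n)^{1/(\gamma+1)}$ satisfies $n_2/n\sim(k/n)^{\gamma/(\gamma+1)}\to 0$ under $k=o(n)$, which also gives $n_2/k\sim(n/k)^{1/(\gamma+1)}\to\infty$, so $u\to 0$ indeed, the approximation $n-n_2\sim n$ is valid, and the claim follows.

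The main obstacle is not the calculus step but the first reduction: making the transition from $P_k$ to $\hat P_k$ fully rigorous requires controlling (i) the gap between the binomial $S_j$ and its Poisson approximation uniformly in $j\le n_2$, (ii) the gap between the $n_2$-th order statistic $S_{i_{n_2}}$ and the proxy $S_{n_2}$, which the authors justified only heuristically in Section~\ref{sec:Poisson}, and (iii) the fact that the $F_j$ are themselves random order statistics of a regularly varying distribution, so the EVT asymptotics $F_j\sim L^{*}(M/j)(M/j)^{\gamma}$ enter only through concentration results. A clean way to package all of this is to prove the theorem at the level of the Poisson prediction, which is the working model used throughout Section~\ref{sec:analysis}; the slowly varying factor $L^{*}$ then disappears from the leading order because $(L^{*}(n_2)/L^{*}(k))^{1/(\gamma+1)}\to 1$, leaving the clean expression $(3\gamma k^{\gamma}n)^{1/(\gamma+1)}(1+o(1))$.
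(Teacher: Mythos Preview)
Your proposal is correct and follows essentially the same route as the paper's proof: a normal approximation reduces the problem to maximizing $\sqrt{n_1}\,(F_k-F_{n_2})/\sqrt{F_k+F_{n_2}}$, the EVT scaling replaces $F_j$ by $j^{-\gamma}$, and the same expansion $(1-u)(1+u)^{-1/2}=1-\tfrac{3}{2}u+O(u^{2})$ followed by differentiation yields $n_2^{\gamma+1}\sim 3\gamma k^{\gamma}n$. The only cosmetic differences are that the paper invokes Berry--Esseen directly on the binomial difference (skipping your intermediate Poisson step) and argues \emph{a priori} that the optimal regime is $k=o(n_2)$, $n_2=o(n)$, whereas you verify this self-consistently after solving.
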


\begin{proof}
It follows from Proposition~\ref{prop:n1} that $n_1 \to \infty$ as $n \to \infty$, so we can apply the following normal approximation
\begin{align}
\nonumber
P_k(n_1)& \approx P\left(N\left(\frac{n_1(F_k-F_{n_2})}{N},\frac{n_1(F_k+F_{n_2})}{N}\right)>0\right)\\
\label{eq:normal}
&= P\left(N(0,1) > -\sqrt{\frac{n_1}{N}}\frac{F_k - F_{n_2}}{\sqrt{F_k+F_{n_2}}}\right).
\end{align}
The validity of the normal approximation follows from the Berry-Esseen theorem. In order to maximize the above probability, we need to maximize $\sqrt{\frac{n_1}{N}}\frac{F_k - F_{n_2}}{\sqrt{F_k+F_{n_2}}}$. It follows from EVT that $F_k$ decays as $k^{-\gamma}$.
So, we can maximize
\begin{align}\label{eq:optim}
\frac{\sqrt{n-n_2}\left(k^{-\gamma} - n_2^{-\gamma}\right)}{\sqrt{k^{-\gamma}+n_2^{-\gamma}}}.
\end{align}
Now if $n_2=\mbox{O}(k)$, then $\sqrt{n-n_2}=\sqrt{n}(1+o(1))$ and the maximization of (\ref{eq:optim}) mainly depends on the remaining term in the product, which is an increasing function of $n_2$. This suggests that $n_2$ has to be chosen considerably greater than $k$. Also note that it is optimal to choose $n_2=o(n)$ since only in this case the main term in \eqref{eq:optim} amounts to $\sqrt{n}$. Hence, we proceed assuming the only interesting asymptotic regime where $k=o(n_2)$ and $n_2=o(n)$.
In this asymptotic regime, we can simplify (\ref{eq:optim}) as follows:
$$
\frac{\sqrt{n-n_2}\left(k^{-\gamma} - n_2^{-\gamma}\right)}{\sqrt{k^{-\gamma}+n_2^{-\gamma}}}=
$$
$$
\frac{1}{k^{\gamma/2}} \sqrt{n-n_2} \left( 1- \frac{3}{2} \left(\frac{k}{n_2}\right)^\gamma
+ \mbox{O}\left(\left(\frac{k}{n_2}\right)^{2\gamma}\right)\right).
$$
Next, we differentiate the function
$$
f(n_2):= \sqrt{n-n_2} \left( 1- \frac{3}{2} \left(\frac{k}{n_2}\right)^\gamma\right)
$$
and set the derivative to zero. This results in the following equation:
\begin{equation}\label{eq:n2eq}
\frac{1}{3\gamma k^\gamma} n_2^{\gamma+1} + n_2 - n = 0.
\end{equation}
Since $n_2 = \mbox{o}(n)$, then only the highest order term remains in (\ref{eq:n2eq}) and we immediately obtain the following approximation
$$
n_2= \left( 3 \gamma k^{\gamma} n \right)^{\frac{1}{\gamma+1}}\left(1 + \mbox{o}(1)\right).
$$
\end{proof}

\section{Sublinear complexity}\label{sec:complexity}

The normal approximation (\ref{eq:normal}) implies the following proposition.

\begin{prop}
\label{prop:1}
For large enough $n_1$, the inequality
\begin{equation}
\label{eq:prop1}
Z_k(n_1):=\sqrt{\frac{n_1}{N}}\frac{F_k - F_{n_2}}{\sqrt{F_k+F_{n_2}}} \ge z_{1-\varepsilon},
\end{equation}
where $z_{1-\varepsilon}$ is the $(1-\varepsilon)$-quantile of a standard normal distribution, guarantees that the mean fraction of top-$k$ nodes in $W$ identified by Algorithm~\ref{algo1} is at least $1-\varepsilon$.
\end{prop}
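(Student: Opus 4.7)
The plan is to reduce the claim to the single hardest probability $P_k(n_1)$ and then invoke the normal approximation already used in the proof of Theorem~1.

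First I would establish the monotonicity $P_1(n_1)\ge P_2(n_1)\ge\cdots\ge P_k(n_1)$. Since $F_1\ge F_2\ge\cdots\ge F_M$, a coupling argument gives this: for any $j_1<j_2$, swap the labels of the two indices and couple the original and swapped scenarios so that $S_{j_1}$ is larger and $S_{j_2}$ smaller in the original than in the swapped one (possible because $F_{j_1}\ge F_{j_2}$); then use that, conditionally on the remaining coordinates, the event ``index $j_1$ lies in the top-$n_2$'' is non-decreasing in $S_{j_1}$ and non-increasing in $S_{j_2}$ (under any symmetric tie-breaking rule) to conclude $P_{j_1}(n_1)\ge P_{j_2}(n_1)$. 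This yields $\tfrac{1}{k}\sum_{j=1}^{k}P_j(n_1)\ge P_k(n_1)$, so it suffices to show $P_k(n_1)\ge 1-\varepsilon$.

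For $P_k(n_1)$ I would reuse the derivation leading to~(\ref{eq:normal}): replace the random empirical threshold $S_{i_{n_2}}$ by the independent binomial $S_{n_2}$, whose mean $n_1F_{n_2}/N$ is precisely the $n_2$-th largest among $\{n_1F_i/N\}_{i=1}^M$, and then apply the Berry--Esseen theorem to the binomial difference $S_k-S_{n_2}$. This yields $\Prob(S_k-S_{n_2}>0)=\Phi(Z_k(n_1))+\mathcal{O}(n_1^{-1/2})$; combined with the hypothesis $Z_k(n_1)\ge z_{1-\varepsilon}$, which gives $\Phi(Z_k(n_1))\ge 1-\varepsilon$, the conclusion follows once $n_1$ is large enough to absorb the $\mathcal{O}(n_1^{-1/2})$ remainder.

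I expect the main obstacle to be rigorously justifying the substitution of $S_{i_{n_2}}$ by $S_{n_2}$, since $S_{i_{n_2}}$ depends on $S_k$ while $S_{n_2}$ does not. A clean route is to condition on $(S_i)_{i\ne k}$, express $P_k(n_1)$ as the expectation of the survival function of $S_k$ evaluated at the empirical $n_2$-th largest value among the other $S_i$'s, and then exploit that this order statistic concentrates tightly around $n_1F_{n_2}/N=\E S_{n_2}$ for large $n_1$ (e.g.\ via Bernstein or DKW-type bounds on the empirical distribution of the $S_i$'s). The resulting approximation error is absorbed into the ``large enough $n_1$'' clause of the proposition.
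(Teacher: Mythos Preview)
Your approach is essentially the same as the paper's, which does not give a separate proof at all: the paper simply states that ``the normal approximation~(\ref{eq:normal}) implies the following proposition'' and moves on. Your reduction to $P_k(n_1)$ via the monotonicity $P_1(n_1)\ge\cdots\ge P_k(n_1)$, followed by the normal approximation with Berry--Esseen control, is exactly the reasoning the paper leaves implicit; in fact you supply considerably more detail than the paper does, since the paper never states the monotonicity explicitly nor discusses the replacement of $S_{i_{n_2}}$ by $S_{n_2}$ beyond the heuristic ``$\approx$'' in~(\ref{eq:normal}).
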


Using (\ref{eq:fj}), the estimated lower bound for $n_1$ in (\ref{eq:prop1}) is:
\begin{equation}
\label{eq:rough}
n_1 \ge  \frac{N z_{1-\varepsilon}^2 (k^{-\hat{\gamma}}+n_2^{-\hat{\gamma}})}{\hat F_m m^{\hat \gamma}(k^{-\hat{\gamma}}-n_2^{-\hat{\gamma}})^2}.
\end{equation}
In the case of the Twitter graph with $N=10^9$, $m=20$, $\hat F_{20}=18,825,829$, $k=100$, $n_2=300$, $z_{0.9}\approx 1.28$, $\hat\gamma=0.4510$, this will result in $n_1\ge 1302$, which is more pessimistic than $n_1=700$ but is sufficiently close to reality.  Note that Proposition~\ref{prop:1} is expected to provide a pessimistic estimator for $n_1$, since it uses the $k$-th highest degree, which is much smaller than, e.g., the first or the second highest degree.

We will now express the complexity of our algorithm in terms of $M$ and $N$, assuming that the degrees in $W$ follow a regularly varying distribution $G$ defined in (\ref{eq:regular}). In a special case, when our goal is  to find the highest in-degree nodes in a directed graph, we have $N=M$. If $M$ is, e.g., the number of interest groups, then it is natural to assume that $M$ scales with $N$ and $M\to\infty$ as $N\to\infty$. Our results specify the role of $N$, $M$, and $G$ in the complexity of Algorithm~\ref{algo1}.

From (\ref{eq:rough}) we can already anticipate that $n$ is of the order smaller than $N$ because $F_m$ grows with $M$. This argument is formalized in Theorem~\ref{th:complexity} below.

\begin{theorem}
\label{th:complexity}
Let the in-degrees of the entities in $W$ be independent realizations of a regularly varying distribution $G$ with exponent $1/\gamma$ as defined in (\ref{eq:regular}), and $F_1\ge F_2\ge\cdots\ge F_M$ be their order statistics.
Then for any fixed $\varepsilon,\delta>0$, Algorithm~\ref{algo1} finds the fraction $1-\varepsilon$ of top-$k$ nodes with probability $1-\delta$ in
\[n=\mbox{O}(N/a(M))\]
API requests, as $M,N\to\infty$, where $a(M)=l(M)M^\gamma$ and $l(\cdot)$ is some slowly varying function.
\end{theorem}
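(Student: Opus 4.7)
The plan is to chain Proposition~\ref{prop:1} with standard Extreme Value Theory asymptotics for the upper order statistics of a regularly varying sample, and then upgrade the resulting mean-fraction bound into a high-probability statement via Markov's inequality.

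First I would record the EVT input. Writing $U(t):=\inf\{x:G(x)\ge 1-1/t\}$, the regular variation assumption \eqref{eq:regular} implies that $U$ is regularly varying of index $\gamma$, so $U(t)=t^\gamma \ell(t)$ with $\ell$ slowly varying; set $a(M):=U(M)=M^\gamma \ell(M)$. Classical order-statistics theory (see, e.g., \cite{deHaan-Ferreira}) then gives: (i) for each fixed $k$ the ratio $F_k/a(M)$ converges in distribution to a non-degenerate Fr\'echet-type random variable, so for every $\delta>0$ there exists $c_\delta>0$ such that $F_k \ge c_\delta\, a(M)\, k^{-\gamma}$ on an event $A_M$ of probability $\ge 1-\delta/2$ for all $M$ large; (ii) for any intermediate sequence $n_2=n_2(M)\to\infty$ with $n_2=\mathrm{o}(M)$, one has $F_{n_2}/F_k \xrightarrow{\Prob} 0$ (up to slowly varying factors absorbed into $\ell$).

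Next I would choose $n_2$ as prescribed by Theorem~\ref{th:n2}, which gives $k=\mathrm{o}(n_2)$ and $n_2=\mathrm{o}(n)$, and substitute into the sufficient condition \eqref{eq:rough} of Proposition~\ref{prop:1}. On the event $A_M$ this condition reduces to
\[
n_1\;\ge\;z_{1-\varepsilon'}^2\,\frac{N(F_k+F_{n_2})}{(F_k-F_{n_2})^2}\;=\;\frac{z_{1-\varepsilon'}^2}{c_\delta}\,\frac{N\,k^\gamma}{a(M)}\,(1+\mathrm{o}_{\Prob}(1)),
\]
so taking $n_1$ of this order is enough for Proposition~\ref{prop:1} to apply. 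Combined with Proposition~\ref{prop:n1} ($n=\mathrm{O}(n_1)$) and $n_2=\mathrm{o}(n_1)$ coming from the Theorem~\ref{th:n2} optimizer, this yields $n=\mathrm{O}(N/a(M))$. To upgrade from the mean-fraction bound to the required high-probability guarantee, I would apply Proposition~\ref{prop:1} with tolerance $\varepsilon':=\varepsilon\delta/2$ and Markov's inequality to the nonnegative fraction of top-$k$ nodes \emph{missed} by the algorithm: conditionally on the degrees, $\Prob(\text{fraction identified}<1-\varepsilon)\le\delta/2$, and a union bound with $A_M$ yields the overall $1-\delta$ bound. Only the constants $z_{1-\varepsilon'}^2/c_\delta$ are affected, so the order $n=\mathrm{O}(N/a(M))$ is preserved.

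The main technical obstacle is keeping the two sources of randomness cleanly separated: the i.i.d.\ degrees $F_1\ge\cdots\ge F_M$ (controlled by EVT) and the independent uniform sampling in the first stage of the algorithm (which Proposition~\ref{prop:1} controls, conditionally on the realized degrees). A second subtlety is that, for fixed $k$, $F_k/a(M)$ does not converge to a deterministic constant but to a non-degenerate random variable, so \eqref{eq:rough} must be controlled on a good event rather than pointwise --- this is precisely why the theorem's conclusion is stated as ``with probability $1-\delta$'' rather than almost sure, and why the constant $c_\delta$ must be allowed to depend on $\delta$.
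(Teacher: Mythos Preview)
Your proposal is correct and follows essentially the same route as the paper: define $a(M)$ as the (regularly varying) inverse of the tail, invoke EVT for the top order statistics $F_j/a(M)$, feed this into the normal-approximation condition of Proposition~\ref{prop:1} with $n_2$ taken from Theorem~\ref{th:n2}, and close via Proposition~\ref{prop:n1}. The paper phrases the EVT step as the joint distributional convergence $(F_1/a(M),\ldots,F_k/a(M))\stackrel{d}{\to}\bigl(E_1^{-\gamma},\ldots,(E_1+\cdots+E_k)^{-\gamma}\bigr)$ (so that $Z_k(n_1)\stackrel{d}{\to}\sqrt{C(E_1+\cdots+E_k)^{-\gamma}}$ when $n_1=CN/a(M)$) rather than your good-event formulation, and it does not include your Markov step upgrading the mean-fraction bound to a high-probability guarantee on the realized fraction --- but these are presentational rather than substantive differences.
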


\begin{proof}
Let $a(\cdot)$ be a left-continuous inverse function of $1/(1-G(x))$. Then $a(\cdot)$ is a regularly varying function with index $\gamma$ (see, e.g., \cite{BinGolTeu89}), that is, $a(y)=l(y)y^{\gamma}$ for some slowly varying function $l(\cdot)$. Furthermore, repeating verbatim the proof of Theorem~2.1.1 in \cite{deHaan-Ferreira}, we obtain that for a fixed $m$
\[\left(\frac{F_1}{a(M)},\cdots,\frac{F_m}{a(M)}\right)\stackrel{d}{\to}\left(E_1^{-\gamma},\cdots,(E_1+\cdots+E_m)^{-\gamma}\right),\] where $E_i$ are independent exponential random variables with mean~1 and $\stackrel{d}{\to}$ denotes the convergence in distribution. Now for fixed $k$, choose $n_2$ as in Theorem~\ref{th:n2}. It follows that if $n_1=CN/a(M)$ for some constant $C>0$ then $Z_k(n_1)\stackrel{d}{\to}\sqrt{C(E_1+\cdots+E_k)^{-\gamma}}$ as $M,N\to\infty$. Hence, we can choose $C$, $M$, $N$ large enough so that $P(Z_k(n_1)>z_{1-\varepsilon})>1-\delta$. We conclude that $n_1=\mbox{O}(N/a(M))$ for fixed $k$, as $N,M\to\infty$. Together with Proposition~\ref{prop:n1}, this gives the result.
\end{proof}

In the case $M=N$, as in our experiments on Twitter, Theorem~\ref{th:complexity} states that the complexity of the algorithm is roughly of the order $N^{1-\gamma}$, which is much smaller than linear in realistic networks, where we often observe $\gamma\in(0.3,1)$~\cite{Newman}. The slowly varying term $l(N)$ does not have much effect since it grows slower than any power of $N$. In particular, if $G$ is a pure Pareto distribution, $1-G(x)=Cx^{-1/\gamma}$, $x\ge x_0$, then
$a(N)=C^{\gamma} N^{\gamma}$.

\section{Conclusion}\label{sec:conclusion}

In this paper, we proposed a randomized algorithm for quick detection of popular entities in large online social networks whose architecture has underlying directed graphs. Examples of social network entities are users, interest groups, user categories, etc. We analyzed the algorithm with respect to two performance criteria and compared it with several baseline methods. Our analysis demonstrates that the algorithm has sublinear complexity on networks with heavy-tailed in-degree distribution and that the performance of the algorithm is robust with respect to the values of its few parameters. Our algorithm significantly outperforms the baseline methods and has much wider applicability.

An important ingredient of our theoretical analysis is the substantial use of the extreme value theory. The extreme value theory is not so widely used in computer science and sociology but appears to be a very useful tool in the analysis of social networks. We feel that our work could provide a good motivation for wider applications of EVT in social network analysis. We validated our theoretical results on two very large online social networks by detecting the most popular users and interest groups.

We see several extensions of the present work. A top list of popular entities is just one type of properties of social networks. We expect that both our theoretical analysis, which is based on the extreme value theory, and our two-stage randomized algorithm can be extended to infer and to analyze other properties such as the power law index and the tail, network functions and network motifs, degree-degree correlations, etc. It would be very interesting and useful to develop quick and effective statistical tests to check for the network assortativity and the presence of heavy tails.

Since our approach requires very small number of API requests, we believe that it can be used for tracing network changes. Of course, we need formal and empirical justifications of the algorithm applicability for dynamic networks.

\section*{Acknowledgment}

This work is partially supported by the EU-FET Open grant NADINE (288956) and
CONGAS EU project FP7-ICT-2011-8-317672.

\bibliographystyle{abbrv}
\bibliography{references}

\end{document}